\documentclass[twoside]{IEEEtran}
\usepackage{graphicx}
\usepackage{float}
\usepackage{caption}
\usepackage{multicol}
\usepackage{mathrsfs}
\usepackage{tikz}
\usepackage{amsmath}
\usepackage{amssymb}
\usepackage{amsthm}
\usepackage{multicol}
\usepackage{mathtools}
\usepackage{float}
\usepackage{placeins}
\usepackage{epstopdf}
\usepackage{multirow}
\usepackage{subfigure}
\usepackage{afterpage}
\usepackage{pdflscape}
\usepackage{float}
\restylefloat{table}
\usepackage{enumitem}
\usepackage[utf8]{inputenc}
\floatstyle{ruled}
\newfloat{algorithm}{tbs}{loa}
\providecommand{\algorithmname}{Algorithm}
\floatname{algorithm}{\protect\algorithmname}

\theoremstyle{remark}
\newtheorem{theorem}{Theorem}

\theoremstyle{remark}
\newtheorem{example}{Example}

\pagenumbering{arabic}
\title{Optimal Vector Linear Index Codes for Some Symmetric Side Information Problems}
\begin{document}
\author{Mahesh Babu Vaddi and B.~Sundar~Rajan,~\IEEEmembership{Fellow,~IEEE}
}
\maketitle
\begin{abstract}
This paper deals with vector linear index codes for multiple unicast index coding problems where there is a source with  $K$ messages and there are $K$ receivers each wanting a unique message and having symmetric (with respect to the receiver index) two-sided antidotes (side information). Optimal scalar linear index codes for several such instances of this class of problems for one-sided antidotes(not necessarily adjacent)  have been reported in \cite{MRRarXiv}. These codes can be viewed as special cases of the symmetric unicast index coding problems discussed in \cite{MCJ} with one sided adjacent antidotes.  In this paper, starting from a given multiple unicast index coding problem with with $K$ messages and one-sided adjacent antidotes for which a scalar linear index code $\mathfrak{C}$ is known, we give a construction procedure which constructs a sequence (indexed by $m$) of multiple unicast index problems with two-sided adjacent antidotes (for the same source) for all of which a vector linear code $\mathfrak{C}^{(m)}$ is obtained from $\mathfrak{C}.$  Also, it is shown that if $\mathfrak{C}$ is optimal then  $\mathfrak{C}^{(m)}$ is also optimal for all $m.$ We illustrate our construction for some of the optimal scalar linear codes of \cite{MRRarXiv} though the construction is applicable for all the codes of \cite{MRRarXiv}.\footnote{The authors are with the Department of Electrical Communication Engineering, Indian Institute of Science, Bangalore-560012, India. Email:bsrajan@ece.iisc.ernet.in.}
\end{abstract}
\section{Introduction and Background}
\label{sec1}

The problem of index coding with side information was introduced by Birk and Kol \cite{BiK} and  Bar-Yossef \textit{et al.} \cite{YBJK} studied the class of index coding problems in which each receiver demands only one single message and the number of receivers equals number of messages. Ong and Ho \cite{OnH} classify the binary index coding problem depending on the demands and the side information possessed by the receivers. An index coding problem is unicast if the demand sets of the receivers are disjoint. If the problem is unicast and if the size of each demand set is one, then it is said to be single unicast. Any unicast index problem can be equivalently reduced to an single unicast problem discussed in \cite{YBJK}. For this canonical unicast index coding problem, it was shown that the length of the optimal linear index code is equal to the minrank of the side information graph of the index coding problem but finding the minrank is NP hard. \\

Maleki \textit{et al.} \cite{MCJ} found the capacity of symmetric multiple unicast index problem with neighboring antidotes (side information). In a symmetric multiple unicast index coding problem with equal number of $K$  messages and source-destination pairs, each destination has a total of $U+D=A<K$ antidotes, corresponding to the $U$ messages before (``up" from) and $D$ messages after (``down" from) its desired message. In this setting, the $k^{th}$ receiver $R_{k}$ demands the message $x_{k}$ having the antidotes
\begin{equation}
\label{antidote}
{\cal K}_k= \{x_{k-U},\dots,x_{k-2},x_{k-1}\}\cup\{x_{k+1}, x_{k+2},\dots,x_{k+D}\}.
\end{equation}
The symmetric capacity $C$ of this index coding problem setting is shown to be as follows:
\begin{flushleft}
$U,D \in$ $\mathbb{Z},$\\
$0 \leq U \leq D$,\\
$U+D=A<K$, \\
$C=\left\{
                \begin{array}{ll}
                  {1,\qquad\quad\ A=K-1}\\
                  {\frac{U+1}{K-A+2U}},A\leq K-2\qquad $per message.$
                  \end{array}
              \right.$
\end{flushleft}
\ \\
The above expression for capacity per message can be equivalently expressed as:
\begin{equation}
\label{capacity}
C=\left\{
                \begin{array}{ll}
                  {1 ~~~~~~~~~~~~~~~~~~~~~~~~~~~~~ \mbox{if} ~~ U+D=K-1}\\
                  {\frac{min(U,D)+1}{K+min(U,D)-max(U,D)}} ~~~ \mbox{if} ~~U+D\leq K-2. 
                  \end{array}
              \right.
\end{equation}

In the setting of \cite{MCJ} with one sided antidote cases, i.e., the cases where $U$ or $D$ is zero, without loss of generality, we can assume that $max(U,D)= D$ and $min(U,D)=0$ (all the results hold when $max(U,D)=U$). In this setting, the $k^{th}$ receiver $R_{k}$ demands the message $x_{k}$ having the antidotes,
\begin{equation}
\label{antidote1}
{\cal K}_k =\{x_{k+1}, x_{k+2},\dots,x_{k+D}\}, 
\end{equation}
\noindent
for which \eqref{capacity} reduces to
\begin{equation}
\label{capacity1}
C=\left\{
                \begin{array}{ll}
                  {1 ~~~~~~~~~~~~ \mbox{if} ~~ D=K-1}\\
                  {\frac{1}{K-D}} ~~~~~~~ \mbox{if} ~~D\leq K-2 
                  \end{array}
              \right.
\end{equation}
symbols per message. \\

\subsection{Contributions}
\quad In the capacity expression given in \eqref{capacity} if $\small{U+1}$ divides $\small{K-D+U}$, capacity can be achieved by using scalar linear codes. In the scalar linear coding one packs $K$ messages in $\frac{K-D+U}{U+1}$ dimensions (code symbols). If $\small{U+1}$ does not divide $\small{K-D+U}$, vector linear coding can only achieve capacity. In the vector linear code, one needs to pack $K(U+1)$ message symbols corresponding to $K$ users in $K-D+U$ dimensions. \\

In \cite{MCJ} Maleki $et\ al.$ proved that vector linear coding exists for any arbitrary $U$ and $D$ over a sufficiently large field size by imposing conditions on encoding and decoding matrices $U_{m,k}$ and $V_{m}$. In Section II we prove that vector linear solution exists for a two-sided symmetric antidote problem with $U$ antidotes above and $D$ antidotes below if a scalar linear solution exists for one-sided antidote problem with same number of messages and number of one-sided antidotes $\Delta =\vert D-U \vert$. 
We give a  construction procedure which constructs a sequence of multiple unicast index problems with two-sided antidotes starting from a given multiple unicast index coding problem of one-sided antidote. It is shown that if there is an optimal scalar linear index code for the starting problem then this code can be used to construct an optimal vector linear index code for all the extended problems.  In \cite{MRRarXiv} the authors proposed optimal scalar linear index codes for ten classes of one sided (not necessarily adjacent) symmetric multiple unicast index coding problems with optimal scalar linear index codes. The antidotes assumed in this work is a proper subset of \eqref{antidote1} for eight classes. These optimal codes continue to be optimal if the antidotes are taken to be adjacent as given in \eqref{antidote1}. We illustrate our construction to some of the ten cases of the symmetric multiple unicast problems studied in \cite{MRRarXiv} and demonstrate the new classes of symmetric multicast problems created for all of which an optimal vector linear code is exhibited. \\

In \cite{BVRarXiv}, we proposed a lifting construction which constructs a sequence of multiple unicast index problems with one-sided antidotes with a scalar linear index code starting from a given multiple unicast index coding problem with a known scalar linear index code. The construction in this paper is different in the following two respects:
\begin{enumerate}
\item The construction in \cite{BVRarXiv} starts from a problem with $K$ messages and gives a sequence of problems with $mK$ number of messages for $m=2,3, \dots,$ i.e., the number of messages goes on increasing as the index $m$ moves. Whereas in this work the number of messages remains $K$ and only the size of the antidote sets increase.
\item In \cite{BVRarXiv} new scalar linear index codes are obtained starting from a scalar linear index code for problems of different source sizes whereas in this paper new vector linear index codes are obtained starting from a scalar linear index code for the problem with the same source size. \\ 

\item The lifting construction in \cite{BVRarXiv} results in index coding problems with one sided antidotes where as the construction in this paper results in index coding problems with two sided antidotes.

\end{enumerate} 
Throughout the paper  WLOG we consider the case  $D \geq U$. All the codes discussed in this paper also applicable for $U \geq D$. The decoding procedure in this paper is considered for the binary field. However the index codes considered in this paper works for any finite field.

\section{extension of scalar linear code into vector linear codes}

Let the messages symbols be $\{x_1, x_2, \dots x_K \}.$  For every $x_k,$ when we deal with vector linear index codes the different messages symbols corresponding to $x_k$ are denoted by $x_{k,1}, x_{k,2}, x_{k,3}, \dots$ etc.  

If $U+1$ do not divide $K-D+U$, scalar linear codes can not achieve capacity. In this case capacity can be achieved by using vector linear coding. In the vector linear code, we require to pack $K(U+1)$ message symbols corresponding to $K$ users in $K-D+U=K-\Delta$ dimensions. 

We prove that we can pack $K(U+1)$ message symbols into $K$ symbols $\{y_{1},y_{2},\cdots,y_{K}\}$ and then convert these $K$ symbols into $K-\Delta$ code symbols by using one side adjacent antidote scalar linear code. Define the symbol $y_{k}$ for $k=1,2,\cdots,K$ as 
\begin{equation}
y_{k}=x_{k,1}+x_{k-1,2},...,x_{k-U,U}+x_{k-U,U+1}.
\end{equation}
The symbol $y_{k}$ comprises of  $(U+1)$ message symbols and each of the $K(U+1)$ message symbols appear exactly once in one of the $y_{k}$. In the symbol  $y_{k}$, there exists $(U+1)$ message symbols and these $(U+1)$ message symbols are required by the receivers $R_{j}$, for $ j =k,k-1,\cdots, k-U.$   \\

\begin{figure}[htbp]
\centering{}
\includegraphics[scale=0.9]{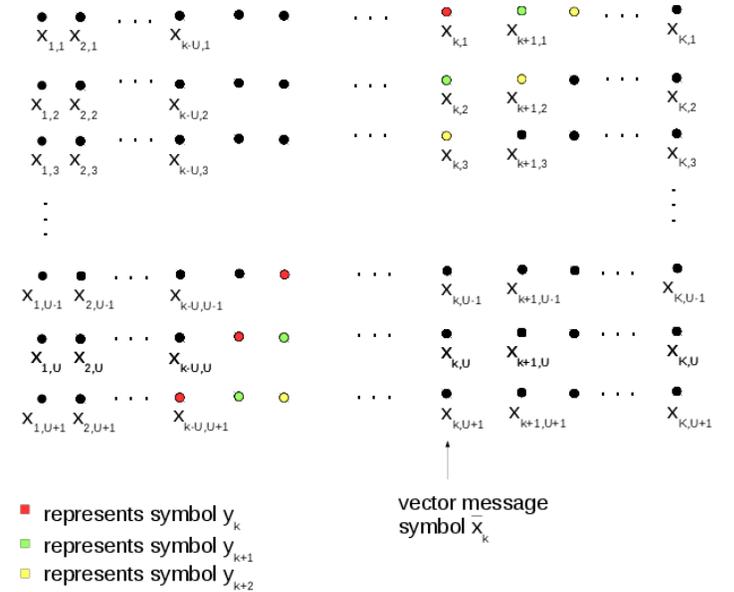}
\label{fig:IFGraphExampleTheorem}
\caption{\small Vector coding message alignment.}
\end{figure}
\begin{theorem}
\label{thm1}
For a multiple unicast index coding problem with $K$ messages $\{y_1,y_2,\cdots,y_K\}$ and the same number of receivers with the receiver $R_k$ wanting the message $y_k$  and having  a symmetric antidote pattern ${\cal K}_k$ given by 
\begin{equation}
\label{antidoteext}
{\cal K}_k =\{y_{k+1}, y_{k+2},\dots,y_{k+D}\},
\end{equation} 
let $\mathfrak{C} = \{ t_1,t_2, \cdots, t_l \}$ be a scalar linear code of  length $l.$ Define $\overline{x_k}=\{x_{k,1},x_{k,2},\cdots,x_{k,U+1}\}$ as $k^{th}$ vector message symbol for $k=1,2,\cdots,K$. For an arbitrary positive integer $U$  consider the index coding problem with  $K$ number of messages $\{\overline{x_1},\overline{x_2},\cdots,\overline{x_{K}}\}$ and the number of receivers being $K,$ and receiver $R_k$ ($k=1,2, \cdots ,K$) having antidote pattern given by 
\begin{equation}
\label{antidoteextension}
{\cal K}_k =\{\overline{x_{k-U}},\dots,\overline{x_{k-2}},\overline{x_{k-1}}\}~\cup~\{\overline{x_{k+1}}, \overline{x_{k+2}},\dots,\overline{x_{k+D+U}}\}.
\end{equation}
For this index coding problem the  code $\mathfrak{C}^{(U+1)}$ is obtained by replacing every message symbol $y_k$ in the code symbols of $\mathfrak{C}$ with $\sum_{i=1}^{U+1} x_{k+1-i,i}$ for $1 \leq k \leq K,$ i.e., by making the substitution $y_k= \sum_{i=1}^{U+1} x_{k+1-i,i}.$
\end{theorem}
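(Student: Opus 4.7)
The plan is to show that receiver $R_k$, using $\mathfrak{C}^{(U+1)}$ and its two-sided antidotes in \eqref{antidoteextension}, can successively decode $x_{k,U+1}, x_{k,U}, \ldots, x_{k,1}$ in that order. Two structural observations drive the argument. First, expanding the substitution $y_j = \sum_{i=1}^{U+1} x_{j+1-i,i}$ at $j = k+s$ shows that $y_{k+s}$ contains the desired symbol $x_{k,s+1}$ (the $i=s+1$ term), while its remaining summands have first-index in $\{k-U+s,\ldots,k+s\}\setminus\{k\}$, which is contained in $\{k-U,\ldots,k-1\}\cup\{k+1,\ldots,k+U\}$, i.e., all lie in the antidote set of $R_k$. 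Second, because $\mathfrak{C}$ solves the scalar one-sided problem, for each index $j$ there is a fixed linear combination of $t_1,\ldots,t_l$ and $y_{j+1},\ldots,y_{j+D}$ that produces $y_j$; invoking this rule at $j=k+s$ is precisely what $R_k$ will simulate.

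The main step is a downward induction on $s$ from $s=U$ to $s=0$, with induction hypothesis that $x_{k,s+2},\ldots,x_{k,U+1}$ have already been recovered. At the $s$-th step I would show that $R_k$ can compute every ``antidote'' $y_{k+s+m}$, $m=1,\ldots,D$, required by the scalar decoding rule for $R_{k+s}$. Expanding $y_{k+s+m} = \sum_{i=1}^{U+1} x_{k+s+m+1-i,i}$: if $s+m \leq U$, the first-index equals $k$ only at $i = s+m+1$, contributing the summand $x_{k,s+m+1}$, which is already known by the induction hypothesis since $s+m+1 > s+1$; the remaining summands have first-indices in $\{k-U,\ldots,k-1\}\cup\{k+1,\ldots,k+D+U\}$ and are therefore antidotes of $R_k$. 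If $s+m \geq U+1$, then $k$ lies entirely outside the first-index window and every summand is an antidote. Either way $R_k$ reconstructs $y_{k+s+m}$; feeding these into the scalar decoding rule for $R_{k+s}$ produces $y_{k+s}$; subtracting its known summands (observation one) yields $x_{k,s+1}$, closing the induction. The base case $s=U$ requires no hypothesis, since for every $m \geq 1$ the first-index set of $y_{k+U+m}$ is $\{k+m,\ldots,k+U+m\}$, which excludes $k$.

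The main obstacle I anticipate is the index bookkeeping needed to make the induction noncircular: one must verify that in every $y_{k+s+m}$ encountered at step $s$, the unique $x_{k,\cdot}$ summand (when one exists) has second index strictly greater than $s+1$, and that every non-$x_{k,\cdot}$ summand lies within the two-sided antidote set $\{k-U,\ldots,k-1\}\cup\{k+1,\ldots,k+D+U\}$. Once these range inclusions are checked, the proof reduces to applying the scalar decoder $U+1$ times with shifts $s=0,1,\ldots,U$, recovering all of $\overline{x_k}$ from the same $l$ transmitted symbols and confirming that $\mathfrak{C}^{(U+1)}$ is a valid vector linear index code of length $l$ for the two-sided problem.
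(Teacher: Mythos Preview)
Your proposal is correct and follows essentially the same approach as the paper: a downward induction in which receiver $R_k$ applies the scalar decoding rule at shifted index $k+s$ for $s=U,U-1,\ldots,0$, recovering $x_{k,s+1}$ at step $s$ after verifying that every other summand of the relevant $y_{k+s+m}$ is either an antidote of $R_k$ or an already-decoded $x_{k,\cdot}$ with second index exceeding $s+1$. The only cosmetic difference is that the paper expands the full linear combination $S_{k+s}^{(U+1)}$ in one shot, whereas you first reconstruct each $y_{k+s+m}$ separately and then invoke the scalar decoder; the index bookkeeping and the range inclusions you check are identical to those in the paper's Steps~1--$l$.
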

\begin{proof}
The given scalar linear code $\mathfrak{C} = \{ t_1,t_2, \cdots, t_l \}$ of length  $l$ for a multiple unicast index coding problem with $K$ messages $\{y_1,y_2,\cdots,y_K\}$ and the same number of receivers with the receiver $R_k$ wanting the message $y_k$  and having  a symmetric antidote pattern ${\cal K}_k$ given in \eqref{antidoteext} enables the decoding of $K$ messages $\{y_1,y_2,\cdots,y_K\}$. That is, with linear decoding there exist  combination of code symbols $\{ t_1,t_2, \cdots, t_l \}$ to get the sum of the form given in \eqref{sum} 
\begin{equation}
\label{sum}
S_{k}=y_{k}+y_{k+a_{k,1}}+y_{k+a_{k,2}}+\cdots+y_{k+a_{k,d}} \ \vert \ k=1,2,\cdots,K, \\
\end{equation}
\noindent
for $1 \leq a_{k,1}<a_{k,2}< \cdots <a_{k,d} \leq D,$ such that $y_{k+a_{k,1}}, y_{k+a_{k,2}},\cdots,y_{k+a_{k,d}}$ are in antidotes of receiver $R_{k}$.  Since $a_{k,1},a_{k,2},\cdots,a_{k,d}$ are  less than or equal to $D$ for $k=1,2,\cdots,K$  from the sum $y_{k}+y_{k+a_{k,1}}+y_{k+a_{k,2}}+\cdots+y_{k+a_{k,d}}$, receiver $R_{k}$ can decode its wanted message $y_{k}$.

Given an arbitrary positive integer $U$  consider the index coding problem with  $K$ number of vector message symbols $\{\overline{x_1},\overline{x_2},\cdots,\overline{x_{K}}\}$ and the number of receivers being $K,$ and the receiver $R_k$ ($k=1,2, \cdots ,K$) having antidote pattern given by \eqref{antidoteextension}. The code $\mathfrak{C}^{(U+1)}$ is the index code obtained by making the substitution $y_k= \sum_{i=1}^{U+1} x_{k+1-i,i}$ in the available code $\mathfrak{C}$. We prove that the receiver $R_{k}$ can decode its wanted message set $\overline{x_k}$ ($U+1$ message symbols $x_{k,1},x_{k,2},\cdots,x_{k,U+1}$) by using the code $\mathfrak{C}^{(U+1)}$. For the code $\mathfrak{C}^{(U+1)}$, the sum in \eqref{sum} can be written by \\

{\scriptsize
\begin{equation}
\label{sumextension}
S_{k}^{(U+1)}=\sum\limits_{i=1}^{U+1} x_{k+1-i,i}+\sum\limits_{i=1}^{U+1} x_{k+a_{k,1}+1-i,i}+\cdots+\sum\limits_{i=1}^{U+1} x_{k+a_{k,d}+1-i,i}
\end{equation}
}
We describe the entire decoding process in the following three steps. The Fig.1 will be helpful to trace these steps.\\

\noindent
\emph{Step 1. Decoding of $(U+1)^{th}$ message symbol by receiver $R_{k}$}
\\
\\
The sum $S_{k+U}^{(U+1)}$ similar to  \eqref{sumextension} can be written as 
\begin{small}
$S_{k+U}^{(U+1)}=x_{k,U+1}+\sum_{i=1}^{U} x_{k+U+1-i,i}+\sum_{i=1}^{U+1} x_{k+U+a_{k,1}+1-i,i}+\cdots+\sum_{i=1}^{U+1} x_{k+U+a_{k,d}+1-i,i}.$
\end{small}

In $S_{k+U}^{(U+1)}$, only one message symbol is present which is required by $R_{k}.$ We have  $1 \leq a_{k,1}<a_{k,2}< \cdots <a_{k,d} \leq D$ and thus all other message symbols present in $S_{k+U}^{(U+1)}$ are in antidotes of receiver $R_{k}$ according to antidote pattern as in \eqref{antidoteextension}. (Note that if $S_{k+U}^{U+1}$ comprise $\geq 2$ message symbols which belong to  $\overline{x_{k}}$ and $R_{k}$ requires at least two of them, then messages interfere and $R_{k}$ can not decode any of them). Thus by using $S_{k+U}^{(U+1)}$, receiver $R_{k}$ can decode its $(U+1)^{th}$  message symbol $x_{k,U+1}$. The code $\mathfrak{C}$ enables the decoding of $\{y_{1},y_{2},\cdots,y_{K}\}$ from the sums $S_{1},S_{2},\cdots,S_{K}$, which implies that code $\mathfrak{C}^{(U+1)}$ enables the the decoding of $\{x_{1,U+1},x_{2,U+1},\cdots,x_{k-U,U+1},\cdots,x_{K,U+1}\}.$
\\
\\
\noindent
\emph{Step 2. Decoding of $U^{th}$ message symbol by receiver $R_{k}$}
\\
\\
Receiver $R_{k}$ uses sum $S_{k+U-1}^{(U+1)}$ to decode its $U^{th}$ message symbol $x_{k,U}$. The sum $S_{k+U-1}^{(U+1)}$ similar to \eqref{sumextension} can be written as 
\\
\begin{small}
$S_{k+U-1}^{(U+1)}=x_{k,U}+a(x_{k,U+1})+\sum_{i=1}^{U} x_{k+U-i,i}+\sum_{i=1}^{U+1} x_{k+U+a_{k,1}-i,i}+\cdots+\sum_{i=1}^{U+1} x_{k+U+a_{k,d}-i,i},$ 
\end{small} 
\\
where $a=1$ if $a_{k,1}=1$, else $a=0$. That is, depending on the value of $a_{k,1}$, sum $S_{k+U-1}^{(U+1)}$ comprises of either one message symbol or two message symbols belongs to vector message symbol $\overline{x_{k}}$. \\

If $a_{k,1}>1$,  $S_{k+U-1}^{(U+1)}$ comprises only one message symbol ($x_{k,U}$) which belongs to $\overline{x_{k,U}}$. All other messages present in $S_{k+U-1}^{(U+1)}$ are antidotes to $R_{k}$. Thus receiver $R_{k}$ can decode message symbol $x_{k,U}$. \\

If $a_{k,1}=1$, $S_{k+U-1}^{(U+1)}$ comprises two message symbols ($x_{k,U+1},x_{k,U}$) belongs to vector message symbol $\overline{x_{k}}.$ Receiver $R_{k}$ has already decoded the message symbol $x_{k,U+1}$ and all other messages present in $S_{k+U-1}^{(U+1)}$ are antidotes to $R_{k}$. Thus receiver $R_{k}$ can decode message symbol $x_{k,U}$. 
\\
\\
\noindent
\emph{Step l. Decoding of $(U+2-l)^{th} (1 \leq l \leq U+1)$ message symbol by receiver $R_{k}$}
\\
\\
Receiver $R_{k}$ uses sum $S_{k+U+1-l}^{(U+1)}$ to decode its $(U+2-l)^{th}$ message symbol $x_{k,U+2-l}$. The sum $S_{k+U+1-l}^{(U+1)}$ similar to \eqref{sumextension} can be written as  \\
\begin{small}
\begin{equation}
\begin{aligned}
S_{k+U+1-l}^{(U+1)}=x_{k,U+2-l}&+1_{A_{1}}(l)x_{k,U+2-l+a_{k,1}} \\
&+1_{A_{2}}(l)x_{k,U+2-l+a_{k,2}}+ \\
\cdots &+ 1_{A_{d}}(l)x_{k,U+2-l+a_{k,d}}\\
&+\sum\limits_{i=1, i\neq U+2-l}^{U+1} x_{k+U+2-l-i,i}\\&+
\sum\limits_{i=1,i \neq U+2-l+a_{k,1}}^{U+1}x_{k+U+2-l+a_{k,1}-i,i}+
\cdots \\&+\sum\limits_{i=1, i \neq U+2-l+a_{k,d}}^{U+1} x_{k+U+2-l+a_{k,d}-i,i}.
\end{aligned}
\end{equation}
\end{small}

Where $1_{A}$ is the indicator function such that $1_{A}(x)=1$  if $x \in A$, else it is zero. $A_{j}=\{a_{k,j}+1,a_{k,j}+2,\cdots,U+1\}$ if $U \geq a_{k,j}$, else $A_{j}=\{\Phi\}$ for $j=1,2,\cdots,d$. In the above sum $x_{k,U+2-l}$ is the required message. \\

The quantity $1_{A_{j}}(l)x_{k,U+2-l+a_{k,j}}$ for $j=1,2,\cdots,d$ is the interference to the receiver $R_{k}$ from its wanted vector message symbol $\overline{x_{k}}$. Receiver $R_{k}$ already knows the $l-1$ message symbols $\{x_{k,U+1}, x_{k,U},\cdots,x_{k,U+3-l}\}$ and thus the interference from the wanted message symbol $\overline{x_{k}}$ in the sum $S_{k+U+1-l}^{(U+1)}$ can be canceled. \\

The message symbols  in the terms $\tiny{\sum_{i=1, i\neq U+2-l+a_{k,j}}^{U+1} x_{k+U+2-l+a_{k,j}-i,i}}$ for $j=1,2,\cdots,d$ is in antidotes for receiver $R_{k}$. Thus receiver $R_{k}$ can decode $x_{k,U+2-l}$ from $S_{k+U+1-l}^{(U+1)}$. \\

This procedure continued until all receiver $R_{k}$ decode its $U+1$ wanted messages $\{x_{k,1},x_{k,2},\cdots,x_{k,U+1}\}$. Thus receiver $R_{k}$ decodes its wanted vector message symbol $\overline{x_{k}}$. This completes the proof for decoding. 
\end{proof}
\begin{theorem}
\label{thm2}
In the construction of Theorem \ref{thm1}, if the given code $\mathfrak{C}$ has optimal length $l=K-D,$ then all extended vector linear codes of given code are of optimal length and hence capacity achieving.
\end{theorem}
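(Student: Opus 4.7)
The plan is very short: the substitution $y_k \mapsto \sum_{i=1}^{U+1} x_{k+1-i,i}$ is a purely symbolic operation on the code symbols of $\mathfrak{C}$, so the number of broadcast symbols in $\mathfrak{C}^{(U+1)}$ is exactly the number of symbols in $\mathfrak{C}$. Hence the only real work is to match this length against the information-theoretic lower bound supplied by the capacity formula \eqref{capacity} for the extended two-sided problem of Theorem \ref{thm1}, and argue that the two coincide.

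First I would identify the relevant parameters of the extended problem of \eqref{antidoteextension}. It has $K$ vector messages $\overline{x_k}$, each consisting of $U+1$ scalar symbols, with $U$ adjacent antidotes above and $D+U$ adjacent antidotes below, so $\min(U,D+U)=U$ and $\max(U,D+U)=D+U$. Plugging into \eqref{capacity} I get symmetric capacity
\[
C \;=\; \frac{U+1}{K+U-(D+U)} \;=\; \frac{U+1}{K-D}
\]
per (vector) message, valid under the mild range assumption $2U+D\le K-2$ already implicit in the setup. Since every vector message carries $U+1$ scalar symbols, the shortest broadcast length $\ell^{*}$ capable of achieving this rate satisfies $(U+1)/\ell^{*}=(U+1)/(K-D)$, i.e.\ $\ell^{*}=K-D$.

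Next I would count the length of $\mathfrak{C}^{(U+1)}$. By construction each code symbol of $\mathfrak{C}^{(U+1)}$ is obtained from a code symbol $t_j$ of $\mathfrak{C}$ by the substitution $y_k=\sum_{i=1}^{U+1} x_{k+1-i,i}$; no code symbols are created or removed. Hence $\mathfrak{C}^{(U+1)}$ has length exactly $l$, which by hypothesis equals $K-D$. Combining with the previous paragraph, its length meets the converse bound $\ell^{*}=K-D$, and Theorem \ref{thm1} guarantees that it is a valid index code for the extended problem. Therefore $\mathfrak{C}^{(U+1)}$ is of optimal length and hence capacity achieving.

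There is essentially no hard step here: once Theorem \ref{thm1} supplies the validity of $\mathfrak{C}^{(U+1)}$, the whole argument reduces to (a) noting that symbol-wise substitution preserves block length and (b) evaluating \eqref{capacity} at $(\min,\max)=(U,D+U)$. The only thing to be careful about is the boundary case in which $2U+D=K-1$, where \eqref{capacity} gives $C=1$; in that regime $\ell^{*}=U+1$ and one should check that $K-D=U+1$ as well, which indeed holds. Modulo this sanity check, the proof is a one-line length-counting argument backed by the converse from \cite{MCJ}.
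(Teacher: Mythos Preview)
Your proposal is correct and follows essentially the same approach as the paper: observe that the substitution preserves the number of code symbols, evaluate the capacity formula \eqref{capacity} for the extended two-sided problem, and check that the two coincide. The paper's own proof is the same one-line rate count, though it tacitly switches to the notation of Theorem~\ref{thm3} (writing the length as $K-\Delta=K-D+U$ for a two-sided problem with parameters $(U,D)$), whereas you stay in the Theorem~\ref{thm1} notation with parameters $(U,D+U)$ and length $K-D$; your boundary-case check is an extra nicety not present in the paper.
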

\begin{proof}
The number of code symbols in the extended code is equal to the number of code symbols in $\mathfrak{C}$ which is equal to $K-\Delta$ = $K-D+U$. By using $K-D+U$ code symbols, every receiver gets $U+1$ of its wanted messages. The capacity achieved by this code is $\frac{U+1}{K-D+U}$ per message, which is equal to the capacity mentioned in \eqref{capacity}. It follows that the extended vector linear codes are of optimal length.
\end{proof}
\medskip

The following examples illustrates Theorem \ref{thm1} and Theorem \ref{thm2}
\begin{example}
$K=20, \ U=0, \ D=4.$
\\
\\
We have one-sided antidote scalar code $\mathfrak{C}$=$\{y_{1}+y_{5}, ~y_{2}+y_{6}, ~y_{3}+y_{7}, ~y_{4}+y_{8}, y_{5}+y_{9}, ~y_{6}+y_{10}, ~y_{7}+y_{11}, ~y_{8}+y_{12}, ~y_{9}+y_{13}, ~y_{10}+y_{14}, ~y_{11}+y_{15}, ~y_{12}+y_{16}, ~y_{13}+y_{17}, ~y_{14}+y_{18}, ~y_{15}+y_{19}, ~y_{16}+y_{20}\}$.
\\
\\
$Case~ I$:  $K=20,\ U=1,\ D=5.$
\\
\\
$\Delta=4$, Capacity=$\frac{2}{16}$. Define $y_{i}=x_{i,1}+x_{i-1,\ 2}$ for $i=1,2,\dots,20.$ We have
\\
\\
\begin{small}
$y_{1}=x_{1,1}+x_{20,2}, ~~ y_{11}=x_{11,1}+x_{10,2}$,\\
$y_{2}=x_{2,1}+x_{1,2},  ~~~y_{12}=x_{12,1}+x_{11,2}$,\\
$y_{3}=x_{3,1}+x_{2,2},  ~~~~y_{13}=x_{13,1}+x_{12,2}$,\\
$y_{4}=x_{4,1}+x_{3,2},  ~~~~y_{14}=x_{14,1}+x_{13,2}$,\\
$y_{5}=x_{5,1}+x_{4,2},  ~~~~y_{15}=x_{15,1}+x_{14,2}$,\\
$y_{6}=x_{6,1}+x_{5,2},  ~~~~y_{16}=x_{16,1}+x_{15,2}$,\\
$y_{7}=x_{7,1}+x_{6,2},  ~~~~y_{17}=x_{17,1}+x_{16,2}$,\\
$y_{8}=x_{8,1}+x_{7,2},  ~~~~y_{18}=x_{18,1}+x_{17,2}$,\\
$y_{9}=x_{9,1}+x_{8,2},  ~~~~y_{19}=x_{19,1}+x_{18,2}$,\\
$y_{10}=x_{10,1}+x_{9,2},  ~~y_{20}=x_{20,1}+x_{19,2}$.
\\
We get the code $\mathfrak{C^{(2)}}$ given by
\\
 $\mathfrak{C^{(2)}}=\{{x_{1,1}}+x_{20,2}+x_{5,1}+x_{4,2},\
 {x_{2,1}}+x_{1,2}+x_{6,1}+x_{5,2},\
 {x_{3,1}}+x_{2,2}+x_{7,1}+x_{6,2},\
 {x_{4,1}}+x_{3,2}+x_{8,1}+x_{7,2},\
 {x_{5,1}}+x_{4,2}+x_{9,1}+x_{8,2},\
 {x_{6,1}}+x_{5,2}+x_{10,1}+x_{9,2},\
 {x_{7,1}}+x_{6,2}+x_{11,1}+x_{10,2},\
 {x_{8,1}}+x_{7,2}+x_{12,1}+x_{11,2},\
 {x_{9,1}}+x_{8,2}+x_{13,1}+x_{12,2},\
 {x_{10,1}}+x_{9,2}+x_{14,1}+x_{13,2},\
 {x_{11,1}}+x_{10,2}+x_{15,1}+x_{14,2},\
 {x_{12,1}}+x_{11,2}+x_{16,1}+x_{15,2},\
 {x_{13,1}}+x_{12,2}+x_{17,1}+x_{16,2},\
 {x_{14,1}}+x_{13,2}+x_{18,1}+x_{17,2},\
 {x_{15,1}}+x_{14,2}+x_{19,1}+x_{18,2},\
 {x_{16,1}}+x_{15,2}+x_{20,1}+x_{19,2}\}$.
\end{small}
\\
\\
$Case~ II$:  $K=20,\ U=2,\ D=6.$
\\
\\
$\Delta=4$, Capacity=$\frac{3}{16}$. Defining $y_{i}=x_{i,1}+x_{i-1,\ 2}+x_{i-2,\ 3}$ for $i=1,2,\dots,20$ we get
\\
\\
\begin{small}
$y_{1}=x_{1,1}+x_{20,2}+x_{19,3}, ~~ y_{11}=x_{11,1}+x_{10,2}+x_{9,3}$,\\
$y_{2}=x_{2,1}+x_{1,2}+x_{20,3},  ~~~y_{12}=x_{12,1}+x_{11,2}+x_{10,3}$,\\
$y_{3}=x_{3,1}+x_{2,2}+x_{1,3},  ~~~~y_{13}=x_{13,1}+x_{12,2}+x_{11,3}$,\\
$y_{4}=x_{4,1}+x_{3,2}+x_{2,3},  ~~~~y_{14}=x_{14,1}+x_{13,2}+x_{12,3}$,\\
$y_{5}=x_{5,1}+x_{4,2}+x_{3,3},  ~~~~y_{15}=x_{15,1}+x_{14,2}+x_{13,3}$,\\
$y_{6}=x_{6,1}+x_{5,2}+x_{4,3},  ~~~~y_{16}=x_{16,1}+x_{15,2}+x_{14,3}$,\\
$y_{7}=x_{7,1}+x_{6,2}+x_{5,3},  ~~~~y_{17}=x_{17,1}+x_{16,2}+x_{15,3}$,\\
$y_{8}=x_{8,1}+x_{7,2}+x_{6,3},  ~~~~y_{18}=x_{18,1}+x_{17,2}+x_{16,3}$,\\
$y_{9}=x_{9,1}+x_{8,2}+x_{7,3},  ~~~~y_{19}=x_{19,1}+x_{18,2}+x_{17,3}$,\\
$y_{10}=x_{10,1}+x_{9,2}+x_{8,3},  ~~y_{20}=x_{20,1}+x_{19,2}+x_{18,3}$.
\\
For this case we get the extended code $ \mathfrak{C^{(3)}}$ to be
\\
 $ \mathfrak{C^{(3)}}=\{x_{1,1}+x_{20,2}+x_{19,3}+x_{5,1}+x_{4,2}+x_{3,3}, 
~~ {x_{2,1}}+x_{1,2}+x_{20,3}+x_{6,1}+x_{5,2}+x_{4,3}, 
~~ {x_{3,1}}+x_{2,2}+x_{1,3}+x_{7,1}+x_{6,2}+x_{5,3}, 
~~ {x_{4,1}}+x_{3,2}+x_{2,3}+x_{8,1}+x_{7,2}+x_{6,3}, 
~~ {x_{5,1}}+x_{4,2}+x_{3,3}+x_{9,1}+x_{8,2}+x_{7,3}, 
~~ {x_{6,1}}+x_{5,2}+x_{4,3}+x_{10,1}+x_{9,2}+x_{8,3}, 
~~ {x_{7,1}}+x_{6,2}+x_{5,3}+x_{11,1}+x_{10,2}+x_{9,3}, 
~~ {x_{8,1}}+x_{7,2}+x_{6,3}+x_{12,1}+x_{11,2}+x_{10,3}, 
~~ {x_{9,1}}+x_{8,2}+x_{7,3}+x_{13,1}+x_{12,2}+x_{11,3}, 
~~ {x_{10,1}}+x_{9,2}+x_{8,3}+x_{14,1}+x_{13,2}+x_{12,3}, 
~~ {x_{11,1}}+x_{10,2}+x_{9,3}+x_{15,1}+x_{14,2}+x_{13,3}, 
~~ {x_{12,1}}+x_{11,2}+x_{10,3}+x_{16,1}+x_{15,2}+x_{14,3}, 
~~ {x_{13,1}}+x_{12,2}+x_{11,3}+x_{17,1}+x_{16,2}+x_{15,3}, 
~~ {x_{14,1}}+x_{13,2}+x_{12,3}+x_{18,1}+x_{17,2}+x_{16,3}, 
~~ {x_{15,1}}+x_{14,2}+x_{13,3}+x_{19,1}+x_{18,2}+x_{17,3}, 
~~ {x_{16,1}}+x_{15,2}+x_{14,3}+x_{20,1}+x_{19,2}+x_{18,3}\}$.
\end{small}
\\
\\
$Case ~ III$:  $K=20,\ U=3,\ D=7.$
\\
\\
$\Delta=4$, Capacity=$\frac{3}{16}$.  With defining $y_{i}=x_{i,1}+x_{i-1,\ 2}+x_{i-2,\ 3}+x_{i-3,\ 4}$ for $i=1,2,\dots,20,$ we end with
\\
\\
\begin{small}
$y_{1}=x_{1,1}+x_{20,2}+x_{19,3}+x_{18,4}, ~~ y_{11}=x_{11,1}+x_{10,2}+x_{9,3}+x_{8,4}$,\\
$y_{2}=x_{2,1}+x_{1,2}+x_{20,3}+x_{19,4},  ~~~y_{12}=x_{12,1}+x_{11,2}+x_{10,3}+x_{9,4}$,\\
$y_{3}=x_{3,1}+x_{2,2}+x_{1,3}+x_{20,4},  ~~~~y_{13}=x_{13,1}+x_{12,2}+x_{11,3}+x_{10,4}$,\\
$y_{4}=x_{4,1}+x_{3,2}+x_{2,3}+x_{1,4},  ~~~~y_{14}=x_{14,1}+x_{13,2}+x_{12,3}+x_{11,4}$,\\
$y_{5}=x_{5,1}+x_{4,2}+x_{3,3}+x_{2,4},  ~~~~y_{15}=x_{15,1}+x_{14,2}+x_{13,3}+x_{12,4}$,\\
$y_{6}=x_{6,1}+x_{5,2}+x_{4,3}+x_{3,4},  ~~~~y_{16}=x_{16,1}+x_{15,2}+x_{14,3}+x_{13,4}$,\\
$y_{7}=x_{7,1}+x_{6,2}+x_{5,3}+x_{4,4},  ~~~~y_{17}=x_{17,1}+x_{16,2}+x_{15,3}+x_{14,4}$,\\
$y_{8}=x_{8,1}+x_{7,2}+x_{6,3}+x_{5,4},  ~~~~y_{18}=x_{18,1}+x_{17,2}+x_{16,3}+x_{15,4}$,\\
$y_{9}=x_{9,1}+x_{8,2}+x_{7,3}+x_{6,4},  ~~~~y_{19}=x_{19,1}+x_{18,2}+x_{17,3}+x_{16,4}$,\\
$y_{10}=x_{10,1}+x_{9,2}+x_{8,3}+x_{7,4},  ~~y_{20}=x_{20,1}+x_{19,2}+x_{18,3}+x_{17,4}$.
\\
The resulting extended vector linear code is 
\\
 $ \mathfrak{C^{(4)}}=\{x_{1,1}+x_{20,2}+x_{19,3}+x_{18,4}+x_{5,1}+x_{4,2}+x_{3,3}+x_{2,4}, 
~~ {x_{2,1}}+x_{1,2}+x_{20,3}+x_{19,4}+x_{6,1}+x_{5,2}+x_{4,3}+x_{3,4}, 
~~ {x_{3,1}}+x_{2,2}+x_{1,3}+x_{20,4}+x_{7,1}+x_{6,2}+x_{5,3}+x_{4,4}, 
~~ {x_{4,1}}+x_{3,2}+x_{2,3}+x_{1,4}+x_{8,1}+x_{7,2}+x_{6,3}+x_{5,4}, 
~~ {x_{5,1}}+x_{4,2}+x_{3,3}+x_{2,4}+x_{9,1}+x_{8,2}+x_{7,3}+x_{6,4}, 
~~ {x_{6,1}}+x_{5,2}+x_{4,3}+x_{3,4}+x_{10,1}+x_{9,2}+x_{8,3}+x_{7,4}, 
~~ {x_{7,1}}+x_{6,2}+x_{5,3}+x_{4,4}+x_{11,1}+x_{10,2}+x_{9,3}+x_{8,4}, 
~~ {x_{8,1}}+x_{7,2}+x_{6,3}+x_{5,4}+x_{12,1}+x_{11,2}+x_{10,3}+x_{9,4}, 
~~ {x_{9,1}}+x_{8,2}+x_{7,3}+x_{6,4}+x_{13,1}+x_{12,2}+x_{11,3}+x_{10,4}, 
~~ {x_{10,1}}+x_{9,2}+x_{8,3}+x_{7,4}+x_{14,1}+x_{13,2}+x_{12,3}+x_{11,4}, 
~~ {x_{11,1}}+x_{10,2}+x_{9,3}+x_{8,4}+x_{15,1}+x_{14,2}+x_{13,3}+x_{12,4}, 
~~ {x_{12,1}}+x_{11,2}+x_{10,3}+x_{9,4}+x_{16,1}+x_{15,2}+x_{14,3}+x_{13,4}, 
~~ {x_{13,1}}+x_{12,2}+x_{11,3}+x_{10,4}+x_{17,1}+x_{16,2}+x_{15,3}+x_{18,4},
~~ {x_{14,1}}+x_{13,2}+x_{12,3}+x_{11,4}+x_{18,1}+x_{17,2}+x_{16,3}+x_{15,4}, 
~~ {x_{15,1}}+x_{14,2}+x_{13,3}+x_{12,4}+x_{19,1}+x_{18,2}+x_{17,3}+x_{16,4}, 
~~ {x_{16,1}}+x_{15,2}+x_{14,3}+x_{13,4}+x_{20,1}+x_{19,2}+x_{18,3}+x_{17,4}\}$.
\end{small}
\end{example}
\medskip

\begin{theorem}
\label{thm3}
Consider a multiple unicast index coding problem with $K$ messages and the same number of receivers with the receiver $R_k$ wanting the message $x_k$  and having  a symmetric antidote pattern ${\cal K}_k$ given in \eqref{antidote}. For this index coding problem vector linear solution exists if scalar linear solution exists for one-sided antidote problem as mentioned in \eqref{antidote1} with same number of messages and number of one-sided antidotes $\Delta =\vert D-U \vert$.  The vector linear index code is optimal if the scalar linear code is optimal.\\
\end{theorem}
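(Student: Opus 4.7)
The plan is to deduce Theorem \ref{thm3} as an essentially immediate corollary of Theorems \ref{thm1} and \ref{thm2} by a careful matching of parameters. Given the target two-sided symmetric unicast problem with $K$ messages and antidote sizes $U$ (above) and $D$ (below), WLOG with $D \geq U$, I would first set $\Delta = D - U$. The hypothesis of Theorem \ref{thm3} then supplies a scalar linear code $\mathfrak{C}$ for the one-sided antidote problem on $K$ messages with antidote pattern ${\cal K}_k = \{y_{k+1}, y_{k+2}, \ldots, y_{k+\Delta}\}$, which is exactly the input format required by Theorem \ref{thm1}.

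Next I would apply the construction of Theorem \ref{thm1} to $\mathfrak{C}$ with the given parameter $U$, i.e., with $D$ of Theorem \ref{thm1} specialised to $\Delta$. The resulting vector linear code $\mathfrak{C}^{(U+1)}$ solves the index coding problem whose two-sided antidote set, as read off from \eqref{antidoteextension}, is $\{\overline{x_{k-U}}, \ldots, \overline{x_{k-1}}\} \cup \{\overline{x_{k+1}}, \ldots, \overline{x_{k+\Delta+U}}\}$. Since $\Delta + U = D$, this is precisely the target symmetric antidote pattern \eqref{antidote}. Correct decoding of each $\overline{x_k}$ is then automatic from the three-step inductive argument already given in the proof of Theorem \ref{thm1}, so the existence half of Theorem \ref{thm3} is established.

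For the optimality claim, if $\mathfrak{C}$ achieves the optimal scalar length $l = K - \Delta$, Theorem \ref{thm2} guarantees that $\mathfrak{C}^{(U+1)}$ has length $K - \Delta = K - D + U$ while delivering $U+1$ message symbols to each receiver. The resulting per-message rate $\frac{U+1}{K - D + U}$ matches the capacity expression \eqref{capacity} in the regime $U + D \leq K - 2$, so the constructed vector code is capacity-achieving, giving the optimality half of the theorem.

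There is essentially no technical obstacle beyond the parameter bookkeeping: all the real work (alignment of message symbols into the $y_k$, the interference cancellation across the $U+1$ decoding rounds, the length count) is already discharged inside Theorems \ref{thm1} and \ref{thm2}. The only point worth flagging in the write-up is the observation that both the capacity formula \eqref{capacity} and the length of $\mathfrak{C}^{(U+1)}$ depend on $(U,D)$ only through $U+1$ and $K - \Delta = K - D + U$, which is precisely what makes the parameter substitution $D \mapsto \Delta$ in Theorem \ref{thm1} translate an optimal one-sided code into an optimal two-sided vector code.
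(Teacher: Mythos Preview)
Your proposal is correct and follows exactly the paper's own approach: the paper's proof of Theorem \ref{thm3} is the single line ``Proof follows from Theorem \ref{thm1} and Theorem \ref{thm2},'' and your write-up simply makes explicit the parameter substitution $D \mapsto \Delta = D - U$ in Theorem \ref{thm1} so that the resulting antidote set $\{\overline{x_{k-U}},\ldots,\overline{x_{k+\Delta+U}}\}$ coincides with \eqref{antidote}. There is nothing to add beyond the bookkeeping you have already spelled out.
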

\begin{proof}
Proof follows from Theorem \ref{thm1} and Theorem \ref{thm2}.
\end{proof}
\medskip
\begin{theorem}
\label{thm4}
Proposed codes in \cite{MRRarXiv} can be extended for all $U$ and $D$ with $\Delta$ = $max(U,D)-min(U,D)$ for the following problem instances:
\begin{enumerate}
\item $\Delta$ divides $K$
\item $K-\Delta$ divides $K$
\item $\frac{K}{2}-\Delta$ divides $\Delta$ 
\item $\Delta - \frac{K}{2}$ divides $\frac{K}{2}$ 
\item $\Delta$ divides $K-\lambda$ and $\lambda$ divides $\Delta$ where $\lambda$ is an integer
\item $K-\Delta$ divides $K-\lambda$ and $\lambda$ divides $K-\Delta$ 
\item $\Delta+\lambda$ divides $K$ and $\lambda$ divides $\Delta$
\item $K-\Delta+\lambda$ divides $K$ and $\lambda$ divides $K-\Delta$
\item $\Delta$ divides $K+\lambda$ and $\lambda$ divides $\Delta$
\item $K-\Delta$ divides $K+\lambda$ and $\lambda$ divides $K-\Delta$
\end{enumerate}
\end{theorem}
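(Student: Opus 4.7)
The plan is to recognize that Theorem \ref{thm4} is essentially a corollary of Theorem \ref{thm3}, once one matches the ten divisibility conditions in the statement with the ten classes of one-sided symmetric unicast problems for which \cite{MRRarXiv} exhibits optimal scalar linear index codes. Concretely, conditions (1)--(10) are precisely the parameter conditions under which \cite{MRRarXiv} constructs an optimal scalar linear code for a one-sided symmetric problem with $K$ messages and $\Delta$ antidotes, where the antidote parameter in \cite{MRRarXiv} is identified here with $\Delta = |D-U|$.

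First I would, for each of the ten conditions, explicitly invoke the corresponding class from \cite{MRRarXiv} to produce a scalar linear index code $\mathfrak{C}$ of length $K-\Delta$ for the one-sided problem \eqref{antidote1} with antidote size $\Delta$. Here I would appeal to the observation already made in Section I of the present paper: although the codes of \cite{MRRarXiv} are stated for antidotes that are a (not necessarily adjacent) subset of \eqref{antidote1}, they remain valid and optimal when the antidotes are taken to be the adjacent ones required by \eqref{antidote1}, because shrinking no receiver's knowledge does not invalidate the decoding relations, and the lower bound $K-\Delta$ on the length continues to hold for the more permissive adjacent setting.

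Next I would feed each such $\mathfrak{C}$ into the construction of Theorem \ref{thm1}, making the substitution $y_k = \sum_{i=1}^{U+1} x_{k+1-i,i}$ to obtain a vector linear code $\mathfrak{C}^{(U+1)}$ for the two-sided problem with antidote pattern \eqref{antidoteextension}, where $U$ is arbitrary and $D$ is determined by $D = U + \Delta$. Theorem \ref{thm1} guarantees that $\mathfrak{C}^{(U+1)}$ is a valid index code for this extended two-sided problem, and Theorem \ref{thm2} then promotes optimality of $\mathfrak{C}$ (length $K-\Delta$) to optimality of $\mathfrak{C}^{(U+1)}$, since the per-message rate $\frac{U+1}{K-\Delta}$ matches the capacity formula \eqref{capacity}. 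Combining these via Theorem \ref{thm3} yields the claim for each of the ten cases.

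The main obstacle I anticipate is not the construction itself but the bookkeeping in matching the ten divisibility conditions with the precise hypotheses of the ten code families in \cite{MRRarXiv}, and confirming case by case that the role played by the one-sided antidote parameter there corresponds here to $\Delta = \max(U,D) - \min(U,D)$. Once that identification is made and the adjacency remark is justified, there is no additional content to prove beyond citing Theorem \ref{thm3}; the entire argument reduces to a table of ten applications of the same extension procedure.
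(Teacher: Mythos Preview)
Your proposal is correct and follows essentially the same approach as the paper: identify the ten divisibility conditions with the ten one-sided families in \cite{MRRarXiv} (with the antidote parameter there playing the role of $\Delta$), observe that those scalar codes remain valid and optimal for the adjacent antidote pattern \eqref{antidote1}, and then apply the extension construction. The paper's proof cites Theorem~\ref{thm1} directly rather than routing through Theorem~\ref{thm3}, but since Theorem~\ref{thm3} is itself just the combination of Theorems~\ref{thm1} and~\ref{thm2}, there is no substantive difference.
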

\begin{proof}
In \cite{MRRarXiv} codes for symmetric instances of one side antidote problem for a given $K$ and $D$ satisfying the above mentioned conditions with $\Delta$ replaced with $D$ were proposed. The proposed codes in \cite{MRRarXiv} can also be used  for the one-sided adjacent antidotes as given in \eqref{antidote1} and these codes achieve the capacity in \eqref{capacity1}. Then the  proof follows from Theorem \ref{thm1}.
\end{proof}
\medskip
\emph{Corollary 1.} If $D$ divides $K$ then the proposed  optimal length scalar linear code is 
$\mathfrak{C}=\{x_{i+(j-1)D}+x_{i+jD}|~i=1,2,\dots,D,~ j=1,2,\dots, \frac{K}{D}-1\}$ 
\begin{example}
$K=20,\ U=2,\ D=6.$
\\
\\
$K=20, \Delta=4$, capacity=$\frac{3}{16}$. \\
Let $y_{i}=x_{i,1}+x_{i-1,\ 2}+x_{i-2,\ 3}$ for $i=1,2,\dots,20.$
\\
\\
\begin{small}
$y_{1}=x_{1,1}+x_{20,2}+x_{19,3}, ~~ y_{11}=x_{11,1}+x_{10,2}+x_{9,3}$,\\
$y_{2}=x_{2,1}+x_{1,2}+x_{20,3},  ~~~y_{12}=x_{12,1}+x_{11,2}+x_{10,3}$,\\
$y_{3}=x_{3,1}+x_{2,2}+x_{1,3},  ~~~~y_{13}=x_{13,1}+x_{12,2}+x_{11,3}$,\\
$y_{4}=x_{4,1}+x_{3,2}+x_{2,3},  ~~~~y_{14}=x_{14,1}+x_{13,2}+x_{12,3}$,\\
$y_{5}=x_{5,1}+x_{4,2}+x_{3,3},  ~~~~y_{15}=x_{15,1}+x_{14,2}+x_{13,3}$,\\
$y_{6}=x_{6,1}+x_{5,2}+x_{4,3},  ~~~~y_{16}=x_{16,1}+x_{15,2}+x_{14,3}$,\\
$y_{7}=x_{7,1}+x_{6,2}+x_{5,3},  ~~~~y_{17}=x_{17,1}+x_{16,2}+x_{15,3}$,\\
$y_{8}=x_{8,1}+x_{7,2}+x_{6,3},  ~~~~y_{18}=x_{18,1}+x_{17,2}+x_{16,3}$,\\
$y_{9}=x_{9,1}+x_{8,2}+x_{7,3},  ~~~~y_{19}=x_{19,1}+x_{18,2}+x_{17,3}$,\\
$y_{10}=x_{10,1}+x_{9,2}+x_{8,3},  ~~y_{20}=x_{20,1}+x_{19,2}+x_{18,3}$.\\
\medskip

The proposed code is $\mathfrak{C}=\{y_{1}+y_{5}, ~y_{2}+y_{6}, ~y_{3}+y_{7}, ~y_{4}+y_{8}, ~y_{5}+y_{9}, ~y_{6}+y_{10}, ~y_{7}+y_{11}, ~y_{8}+y_{12}, ~y_{9}+y_{13}, ~y_{10}+y_{14}, ~y_{11}+y_{15}, ~y_{12}+y_{16}, ~y_{13}+y_{17}, ~y_{14}+y_{18}, ~y_{15}+y_{19}, ~y_{16}+y_{20}\}$.
\medskip

 $ ~ \mathfrak{C^{(3)}}=\{x_{1,1}+x_{20,2}+x_{19,3}+x_{5,1}+x_{4,2}+x_{3,3},~~{x_{2,1}}+x_{1,2}+x_{20,3}+x_{6,1}+x_{5,2}+x_{4,3},~~ {x_{3,1}}+x_{2,2}+x_{1,3}+x_{7,1}+x_{6,2}+x_{5,3},~~ {x_{4,1}}+x_{3,2}+x_{2,3}+x_{8,1}+x_{7,2}+x_{6,3},~~ {x_{5,1}}+x_{4,2}+x_{3,3}+x_{9,1}+x_{8,2}+x_{7,3},~~ {x_{6,1}}+x_{5,2}+x_{4,3}+x_{10,1}+x_{9,2}+x_{8,3},~~ {x_{7,1}}+x_{6,2}+x_{5,3}+x_{11,1}+x_{10,2}+x_{9,3},~~ {x_{8,1}}+x_{7,2}+x_{6,3}+x_{12,1}+x_{11,2}+x_{10,3},~~ {x_{9,1}}+x_{8,2}+x_{7,3}+x_{13,1}+x_{12,2}+x_{11,3},~~ {x_{10,1}}+x_{9,2}+x_{8,3}+x_{14,1}+x_{13,2}+x_{12,3},~~ {x_{11,1}}+x_{10,2}+x_{9,3}+x_{15,1}+x_{14,2}+x_{13,3},~~ {x_{12,1}}+x_{11,2}+x_{10,3}+x_{16,1}+x_{15,2}+x_{14,3},~~ {x_{13,1}}+x_{12,2}+x_{11,3}+x_{17,1}+x_{16,2}+x_{15,3},~~ {x_{14,1}}+x_{13,2}+x_{12,3}+x_{18,1}+x_{17,2}+x_{16,3},~~ {x_{15,1}}+x_{14,2}+x_{13,3}+x_{19,1}+x_{18,2}+x_{17,3},~~ {x_{16,1}}+x_{15,2}+x_{14,3}+x_{20,1}+x_{19,2}+x_{18,3}\}$.
\end{small}
\end{example}
\medskip
\emph{Corollary 2.} If \mbox{$K-D$} divides $K$, then the proposed code is $${\mathfrak{C}}=\{x_{i}+x_{i+m}+\dots+x_{i+(n-1)m}\ |\ {i = {1,2, \dots, m}\}}$$ where  $K-D=m$ and \mbox{$\frac{K}{K-D}=n$}.\\

\begin{example}
\begin{small}
$K=20,\ U=1,\ D=16$.
\\
\\
$K=20,\ \Delta=15$, capacity=$\frac{2}{5}$.\\
Let $y_{i}=x_{i,1}+x_{i-1,\ 2}$ for $i=1,2,\dots,20.$
\\
\\
$y_{1}=x_{1,1}+x_{20,2}, ~~ y_{11}=x_{11,1}+x_{10,2}$,\\
$y_{2}=x_{2,1}+x_{1,2},  ~~~y_{12}=x_{12,1}+x_{11,2}$,\\
$y_{3}=x_{3,1}+x_{2,2},  ~~~~y_{13}=x_{13,1}+x_{12,2}$,\\
$y_{4}=x_{4,1}+x_{3,2},  ~~~~y_{14}=x_{14,1}+x_{13,2}$,\\
$y_{5}=x_{5,1}+x_{4,2},  ~~~~y_{15}=x_{15,1}+x_{14,2}$,\\
$y_{6}=x_{6,1}+x_{5,2},  ~~~~y_{16}=x_{16,1}+x_{15,2}$,\\
$y_{7}=x_{7,1}+x_{6,2},  ~~~~y_{17}=x_{17,1}+x_{16,2}$,\\
$y_{8}=x_{8,1}+x_{7,2},  ~~~~y_{18}=x_{18,1}+x_{17,2}$,\\
$y_{9}=x_{9,1}+x_{8,2},  ~~~~y_{19}=x_{19,1}+x_{18,2}$,\\
$y_{10}=x_{10,1}+x_{9,2},  ~~y_{20}=x_{20,1}+x_{19,2}$.\\ 
\medskip

The proposed code is,\\
$\mathfrak{C}=\{y_{1}+y_{6}+y_{11}+y_{16}, \\
~~~~~~~~ y_{2}+y_{7}+y_{12}+y_{17}, \\
~~~~~~~~ y_{3}+y_{8}+y_{13}+y_{18}, \\
~~~~~~~~ y_{4}+y_{9}+y_{14}+y_{19},\\
~~~~~~~~ y_{5}+y_{10}+y_{15}+y_{20}\}.$ \\ \\
$ ~ \mathfrak{C^{(2)}}=\{{x_{1,1}}+x_{20,2}+x_{6,1}+x_{5,2}+x_{11,1}+x_{10,2}+x_{16,1}+x_{15,2},\
~~ {x_{2,1}}+x_{1,2}+x_{7,1}+x_{6,2}+x_{12,1}+x_{11,2}+x_{17,1}+x_{16,2},\
~~ {x_{3,1}}+x_{2,2}+x_{8,1}+x_{7,2}+x_{13,1}+x_{12,2}+x_{18,1}+x_{17,2},\
~~ {x_{4,1}}+x_{3,2}+x_{9,1}+x_{8,2}+x_{14,1}+x_{13,2}+x_{19,1}+x_{18,2},\
~~ {x_{5,1}}+x_{4,2}+x_{10,1}+x_{9,2}+x_{15,1}+x_{14,2}+x_{20,1}+x_{19,2}\}.$\\
\end{small}
\end{example}
\medskip
\emph{Corollary 3.} For the case   $\frac{K}{2}-D$ divides $D$, the proposed scalar linear code is,\\ 
$\mathfrak{C}=\{x_{i}+x_{i+m}+\dots+x_{i+pm}, \\ 
~~~~~~~~ x_{i+m}+x_{i+2m}+\dots+x_{i+(p+1)m}, \\
~~~~~~~~~~~~~~~~~~~~ \vdots \\
~~~~~~~~ x_{i+m(p+1)}+x_{i+m(p+2)}+\dots+x_{i+(n-1)m} 
 | i=1,2, \dots ,m\},$
 where  $m=\frac{K}{2}-D$, $n=\frac{K}{\frac{K}{2}-D}$ and $\frac{D}{\frac{K}{2}-D}=p$. \\
\medskip
\begin{example}
\begin{small}
$K=20,\ U=1,\ D=9$. 
\\
\\
$K=20,\ \Delta=8$, capacity=$\frac{2}{12}$.\\
Let $y_{i}=x_{i,1}+x_{i-1,\ 2}$ for $i=1,2,\dots,20.$
\\
\\
$y_{1}=x_{1,1}+x_{20,2}, ~~~y_{2}=x_{2,1}+x_{1,2}, ~~~~y_{3}=x_{3,1}+x_{2,2}$,\\
$y_{4}=x_{4,1}+x_{3,2}, ~~~~y_{5}=x_{5,1}+x_{4,2}, ~~~~y_{6}=x_{6,1}+x_{5,2},$\\
$y_{7}=x_{7,1}+x_{6,2}, ~~~~y_{8}=x_{8,1}+x_{7,2}, ~~~~y_{9}=x_{9,1}+x_{8,2}, $\\
$y_{10}=x_{10,1}+x_{9,2}, ~~~y_{11}=x_{11,1}+x_{10,2},~y_{12}=x_{12,1}+x_{11,2},$\\
$y_{13}=x_{13,1}+x_{12,2}, ~~y_{14}=x_{14,1}+x_{13,2}, ~y_{15}=x_{15,1}+x_{14,2}$,\\
$y_{16}=x_{16,1}+x_{15,2}, ~~y_{17}=x_{17,1}+x_{16,2}, ~y_{18}=x_{18,1}+x_{17,2}$,\\
$y_{19}=x_{19,1}+x_{18,2}, ~y_{20}=x_{20,1}+x_{19,2}$.\\ \\
The proposed code is $\mathfrak{C}=$ \\
$\{y_{1}+y_{3}+y_{5}+y_{7}+y_{9}, ~~ y_{2}+y_{4}+y_{6}+y_{8}+y_{10}, \\
y_{3}+y_{5}+y_{7}+y_{9}+y_{11}, ~~ y_{4}+y_{6}+y_{8}+y_{10}+y_{12},\\ 
y_{5}+y_{7}+y_{9}+y_{11}+y_{13}, ~~ y_{6}+y_{8}+y_{10}+y_{12}+y_{14},\\ 
y_{7}+y_{9}+y_{11}+y_{13}+y_{15}, ~~ y_{8}+y_{10}+y_{12}+y_{14}+y_{16},\\
y_{9}+y_{11}+y_{13}+y_{15}+y_{17}, ~~ y_{10}+y_{12}+y_{14}+y_{16}+y_{18},\\
y_{11}+y_{13}+y_{15}+y_{17}+y_{19}, ~~ y_{12}+y_{14}+y_{16}+y_{18}+y_{20}\}$.\\ 

$ ~ \mathfrak{C^{(2)}}=\{~~{x_{1,1}}+x_{20,2}+x_{3,1}+x_{2,2}+x_{5,1}+x_{4,2}+x_{7,1}+x_{6,2}+x_{9,1}+x_{8,2},\ 
~~ {x_{2,1}}+x_{1,2}+x_{4,1}+x_{3,2}+x_{6,1}+x_{5,2}+x_{8,1}+x_{7,2}+x_{10,1}+x_{9,2},\
~~ {x_{3,1}}+x_{2,2}+x_{5,1}+x_{4,2}+x_{7,1}+x_{6,2}+x_{9,1}+x_{8,2}+x_{11,1}+x_{10,2},\
~~ {x_{4,1}}+x_{3,2}+x_{6,1}+x_{5,2}+x_{8,1}+x_{7,2}+x_{10,1}+x_{9,2}+x_{12,1}+x_{11,2},\
~~ {x_{5,1}}+x_{4,2}+x_{7,1}+x_{6,2}+x_{9,1}+x_{8,2}+x_{11,1}+x_{10,2}+x_{13,1}+x_{12,2},\
~~ {x_{6,1}}+x_{5,2}+x_{8,1}+x_{7,2}+x_{10,1}+x_{9,2}+x_{12,1}+x_{11,2}+x_{14,1}+x_{13,2},\
~~ {x_{7,1}}+x_{6,2}+x_{9,1}+x_{8,2}+x_{11,1}+x_{10,2}+x_{13,1}+x_{12,2}+x_{15,1}+x_{14,2},\
~~ {x_{8,1}}+x_{7,2}+x_{10,1}+x_{9,2}+x_{12,1}+x_{11,2}+x_{14,1}+x_{13,2}+x_{16,1}+x_{15,2}, \
~~ {x_{9,1}}+x_{8,2}+x_{11,1}+x_{10,2}+x_{13,1}+x_{12,2}+x_{15,1}+x_{14,2}+x_{17,1}+x_{16,2},\  
~~ {x_{10,1}}+x_{9,2}+x_{12,1}+x_{11,2}+x_{14,1}+x_{13,2}+x_{16,1}+x_{15,2}+x_{18,1}+x_{17,2},\ 
~~ {x_{11,1}}+x_{10,2}+x_{13,1}+x_{12,2}+x_{15,1}+x_{14,2}+x_{17,1}+x_{16,2}+x_{19,1}+x_{18,2},\ 
~~ {x_{12,1}}+x_{11,2}+x_{14,1}+x_{13,2}+x_{16,1}+x_{15,2}+x_{18,1}+x_{17,2}+x_{20,1}+x_{19,2}\}.$
\end{small}
\end{example}
\medskip
\emph{Corollary 4.} If  $D$ divides $K-\lambda$ and $\lambda$ divides $D$,  then the scalar linear code is given by $\mathfrak{C}=\{{x_{i+(j-1)D}+x_{i+jD}}|\ {i = {1,2,\dots,D}},\ {j = {1,2,\dots,n-1}}\}\\ \cup \{{x_{K-\lambda+r}+x_{K-\lambda+r-\lambda}+\dots+x_{K-\lambda+r-t\lambda}}|\ {r = {1,2,\dots,\lambda}},\\ \ t = 1,2,\dots,\frac{D}{\lambda}$\} for $\frac{K-\lambda}{D}>1$  and $\frac{K-\lambda}{D}=n$.
\medskip
\medskip
\begin{example}
$K=21,\  U=2,\ D=6$
\\
\\
$K=21,\Delta=4, \lambda=1$, capacity=$\frac{3}{17}$.\\
Let $y_{i}=x_{i,1}+x_{i-1,\ 2}+x_{i-2,\ 3}$ for $i=1,2,\dots,21.$
\\
\\
\begin{small}
$y_{1}=x_{1,1}+x_{21,2}+x_{20,3}, ~~ y_{11}=x_{11,1}+x_{10,2}+x_{9,3}$,\\
$y_{2}=x_{2,1}+x_{1,2}+x_{21,3},  ~~~y_{12}=x_{12,1}+x_{11,2}+x_{10,3}$,\\
$y_{3}=x_{3,1}+x_{2,2}+x_{1,3},  ~~~~y_{13}=x_{13,1}+x_{12,2}+x_{11,3}$,\\
$y_{4}=x_{4,1}+x_{3,2}+x_{2,3},  ~~~~y_{14}=x_{14,1}+x_{13,2}+x_{12,3}$,\\
$y_{5}=x_{5,1}+x_{4,2}+x_{3,3},  ~~~~y_{15}=x_{15,1}+x_{14,2}+x_{13,3}$,\\
$y_{6}=x_{6,1}+x_{5,2}+x_{4,3},  ~~~~y_{16}=x_{16,1}+x_{15,2}+x_{14,3}$,\\
$y_{7}=x_{7,1}+x_{6,2}+x_{5,3},  ~~~~y_{17}=x_{17,1}+x_{16,2}+x_{15,3}$,\\
$y_{8}=x_{8,1}+x_{7,2}+x_{6,3},  ~~~~y_{18}=x_{18,1}+x_{17,2}+x_{16,3}$,\\
$y_{9}=x_{9,1}+x_{8,2}+x_{7,3},  ~~~~y_{19}=x_{19,1}+x_{18,2}+x_{17,3}$,\\
$y_{10}=x_{10,1}+x_{9,2}+x_{8,3},  ~~y_{20}=x_{20,1}+x_{19,2}+x_{18,3}$,\\
and $y_{21}=x_{21,1}+x_{20,2}+x_{19,3}.$\\ 

The proposed code is $\mathfrak{C}=\{y_{1}+y_{5}, ~y_{2}+y_{6}, ~y_{3}+y_{7}, ~y_{4}+y_{8}, ~y_{5}+y_{9}, ~y_{6}+y_{10}, ~y_{7}+y_{11}, ~y_{8}+y_{12}, ~y_{9}+y_{13}, ~y_{10}+y_{14}, ~y_{11}+y_{15}, ~y_{12}+y_{16}, ~y_{13}+y_{17}, ~y_{14}+y_{18}, ~y_{15}+y_{19}, ~y_{16}+y_{20}, ~y_{17}+y_{18}+y_{19}+y_{20}+y_{21}\}.$\\ \\
 $ ~ \mathfrak{C^{(3)}}=\{{x_{1,1}}+x_{21,2}+x_{20,3}+x_{5,1}+x_{4,2}+x_{3,3},\
~~ {x_{2,1}}+x_{1,2}+x_{21,3}+x_{6,1}+x_{5,2}+x_{4,3},\
~~ {x_{3,1}}+x_{2,2}+x_{1,3}+x_{7,1}+x_{6,2}+x_{5,3},\
~~ {x_{4,1}}+x_{3,2}+x_{2,3}+x_{8,1}+x_{7,2}+x_{6,3},\
~~ {x_{5,1}}+x_{4,2}+x_{3,3}+x_{9,1}+x_{8,2}+x_{7,3},\
~~ {x_{6,1}}+x_{5,2}+x_{4,3}+x_{10,1}+x_{9,2}+x_{8,3},\
~~ {x_{7,1}}+x_{6,2}+x_{5,3}+x_{11,1}+x_{10,2}+x_{9,3},\
~~ {x_{8,1}}+x_{7,2}+x_{6,3}+x_{12,1}+x_{11,2}+x_{10,3},\
~~ {x_{9,1}}+x_{8,2}+x_{7,3}+x_{13,1}+x_{12,2}+x_{11,3},\
~~ {x_{10,1}}+x_{9,2}+x_{8,3}+x_{14,1}+x_{13,2}+x_{12,3},\
~~ {x_{11,1}}+x_{10,2}+x_{9,3}+x_{15,1}+x_{14,2}+x_{13,3},\
~~ {x_{12,1}}+x_{11,2}+x_{10,3}+x_{16,1}+x_{15,2}+x_{14,3},\
~~ {x_{13,1}}+x_{12,2}+x_{11,3}+x_{17,1}+x_{16,2}+x_{15,3},\
~~ {x_{14,1}}+x_{13,2}+x_{12,3}+x_{18,1}+x_{17,2}+x_{16,3},\
~~ {x_{15,1}}+x_{14,2}+x_{13,3}+x_{19,1}+x_{18,2}+x_{17,3},\
~~ {x_{16,1}}+x_{15,2}+x_{14,3}+x_{20,1}+x_{19,2}+x_{18,3},\
~~ {x_{17,1}}+x_{16,2}+x_{15,3}+x_{18,1}+x_{17,2}+x_{16,3}+x_{19,1}+x_{18,2}+x_{17,3}+x_{20,1}+x_{19,2}+x_{18,3}+x_{21,1}+x_{20,2}+x_{19,3}\}.$
 \end{small}
\end{example}
\medskip
\emph{Corollary 5.} If $K-D$ divides $K-\lambda,$  $\lambda$ divides $(K-D)$, then the proposed scalar linear code is $\mathfrak{C}=\{x_{i}+x_{i+m}+\dots+x_{i+(q-1)m}+x_{qm+1+(i-1) mod \lambda)}
|\ i = {1,2,\dots,m}\},$ where $K-D=m$, and $\frac{K-\lambda}{K-D}=q$.
\medskip
\medskip
\begin{example}
$K=21,\ U=1,\ D=17$.
\\
\\
$K=21,\ \Delta=16,$ \ capacity=$\frac{2}{5}$.\\
Let $y_{i}=x_{i,1}+x_{i-1,\ 2}$ for $i=1,2,\dots,21.$
\\
\\
\begin{small}
$y_{1}=x_{1,1}+x_{21,2}, ~~ y_{11}=x_{11,1}+x_{10,2}$,\\
$y_{2}=x_{2,1}+x_{1,2},  ~~~y_{12}=x_{12,1}+x_{11,2}$,\\
$y_{3}=x_{3,1}+x_{2,2},  ~~~~y_{13}=x_{13,1}+x_{12,2}$,\\
$y_{4}=x_{4,1}+x_{3,2},  ~~~~y_{14}=x_{14,1}+x_{13,2}$,\\
$y_{5}=x_{5,1}+x_{4,2},  ~~~~y_{15}=x_{15,1}+x_{14,2}$,\\
$y_{6}=x_{6,1}+x_{5,2},  ~~~~y_{16}=x_{16,1}+x_{15,2}$,\\
$y_{7}=x_{7,1}+x_{6,2},  ~~~~y_{17}=x_{17,1}+x_{16,2}$,\\
$y_{8}=x_{8,1}+x_{7,2},  ~~~~y_{18}=x_{18,1}+x_{17,2}$,\\
$y_{9}=x_{9,1}+x_{8,2},  ~~~~y_{19}=x_{19,1}+x_{18,2}$,\\
$y_{10}=x_{10,1}+x_{9,2},  ~~y_{20}=x_{20,1}+x_{19,2}$,\\
and $y_{21}=x_{21,1}+x_{20,2}.$\\ \\
The proposed code is $\mathfrak{C}=\{y_{1}+y_{6}+y_{11}+y_{16}+y_{21}, \
 y_{2}+y_{7}+y_{12}+y_{17}+y_{21}, \
 y_{3}+y_{8}+y_{13}+y_{18}+y_{21}, \
 y_{4}+y_{9}+y_{14}+y_{19}+y_{21},\
 y_{5}+y_{10}+y_{15}+y_{20}+y_{21}\}.$\\ \\
$ ~ \mathfrak{C^{(2)}}=\{{x_{1,1}}+x_{21,2}+x_{6,1}+x_{5,2}+x_{11,1}+x_{10,2}+x_{16,1}+x_{15,2}+x_{21,1}+x_{20,2},\ 
~~ {x_{2,1}}+x_{1,2}+x_{7,1}+x_{6,2}+x_{12,1}+x_{11,2}+x_{17,1}+x_{16,2}+x_{21,1}+x_{20,2},\ 
~~ {x_{3,1}}+x_{2,2}+x_{8,1}+x_{7,2}+x_{13,1}+x_{12,2}+x_{18,1}+x_{17,2}+x_{21,1}+x_{20,2},\ 
~~ {x_{4,1}}+x_{3,2}+x_{9,1}+x_{8,2}+x_{14,1}+x_{13,2}+x_{19,1}+x_{18,2}+x_{21,1}+x_{20,2},\ 
~~ {x_{5,1}}+x_{4,2}+x_{10,1}+x_{9,2}+x_{15,1}+x_{14,2}+x_{20,1}+x_{19,2}+x_{21,1}+x_{20,2}\}.$\\
\end{small}
\end{example}
\medskip
\medskip
\emph{Corollary 6.} For the case  $D+\lambda$ divides $K,$  $\lambda$ divides $D$, the scalar linear code is\\
$\mathfrak{C}=\{x_{i+j\lambda}+x_{i+(j+1)\lambda}+\dots+x_{i+(j+p)\lambda}|\ {i =1,2,\dots,\lambda},\\ j= 1,2,\dots,\frac{K-D-\lambda}{\lambda}$\}
where  $\frac{D}{\lambda}=p$ and $\frac{K}{D+\lambda}=n.$
\medskip
\medskip
\begin{example}
$K=18,\ U=1,\ D=6.$ 
\\
\\
$K=18,\ \Delta=5, \lambda=1$, capacity=$\frac{2}{13}$. \\
Let $y_{i}=x_{i,1}+x_{i-1,\ 2}$ for $i=1,2,\dots,18.$ 
\\
\\
\begin{small}
$y_{1}=x_{1,1}+x_{18,2}, ~~ y_{10}=x_{10,1}+x_{9,2},$,\\
$y_{2}=x_{2,1}+x_{1,2},  ~~~y_{11}=x_{11,1}+x_{10,2}$,\\
$y_{3}=x_{3,1}+x_{2,2},  ~~~~y_{12}=x_{12,1}+x_{11,2}$,\\
$y_{4}=x_{4,1}+x_{3,2},  ~~~~y_{13}=x_{13,1}+x_{12,2}$,\\
$y_{5}=x_{5,1}+x_{4,2},  ~~~~y_{14}=x_{14,1}+x_{13,2}$,\\
$y_{6}=x_{6,1}+x_{5,2},  ~~~~y_{15}=x_{15,1}+x_{14,2}$,\\
$y_{7}=x_{7,1}+x_{6,2},  ~~~~y_{16}=x_{16,1}+x_{15,2}$,\\
$y_{8}=x_{8,1}+x_{7,2},  ~~~~y_{17}=x_{17,1}+x_{16,2}$,\\
$y_{9}=x_{9,1}+x_{8,2},  ~~~~y_{18}=x_{18,1}+x_{17,2}$,\\ \\
The proposed code is $\mathfrak{C}=\{y_{1}+y_{2}+y_{3}+y_{4}+y_{5}+y_{6},~ y_{2}+y_{3}+y_{4}+y_{5}+y_{6}+y_{7}, \
y_{3}+y_{4}+y_{5}+y_{6}+y_{7}+y_{8}, ~ y_{4}+y_{5}+y_{6}+y_{7}+y_{8}+y_{9}, \
y_{5}+y_{6}+y_{7}+y_{8}+y_{9}+y_{10}, ~ y_{6}+y_{7}+y_{8}+y_{9}+y_{10}+y_{11}, \
y_{7}+y_{8}+y_{9}+y_{10}+y_{11}+y_{12}, ~ y_{8}+y_{9}+y_{10}+y_{11}+y_{12}+y_{13}, \
y_{9}+y_{10}+y_{11}+y_{12}+y_{13}+y_{14}, ~ y_{10}+y_{11}+y_{12}+y_{13}+y_{14}+y_{15}, \
y_{11}+y_{12}+y_{13}+y_{14}+y_{15}+y_{16}, ~ y_{12}+y_{13}+y_{14}+y_{15}+y_{16}+y_{17}, \
y_{13}+y_{14}+y_{15}+y_{16}+y_{17}+y_{18}\}$.\\ \\

{
$ ~ \mathfrak{C^{(2)}}=\{{x_{1,1}}+x_{18,2}+x_{2,1}+x_{1,2}+x_{3,1}+x_{2,2}+x_{4,1}+x_{3,2}+x_{5,1}+x_{4,2}+x_{6,1}+x_{5,2},\ 
~~ {x_{2,1}}+x_{1,2}+x_{3,1}+x_{2,2}+x_{4,1}+x_{3,2}+x_{5,1}+x_{4,2}+x_{6,1}+x_{5,2}+x_{7,1}+x_{6,2},\ 
~~ {x_{3,1}}+x_{2,2}+x_{4,1}+x_{3,2}+x_{5,1}+x_{4,2}+x_{6,1}+x_{5,2}+x_{7,1}+x_{6,2}+x_{8,1}+x_{7,2},\ 
~~ {x_{4,1}}+x_{3,2}+x_{5,1}+x_{4,2}+x_{6,1}+x_{5,2}+x_{7,1}+x_{6,2}+x_{8,1}+x_{7,2}+x_{9,1}+x_{8,2},\ 
~~ {x_{5,1}}+x_{4,2}+x_{6,1}+x_{5,2}+x_{7,1}+x_{6,2}+x_{8,1}+x_{7,2}+x_{9,1}+x_{8,2}+x_{10,1}+x_{9,2},\ 
~~ {x_{6,1}}+x_{5,2}+x_{7,1}+x_{6,2}+x_{8,1}+x_{7,2}+x_{9,1}+x_{8,2}+x_{10,1}+x_{9,2}+x_{11,1}+x_{10,2},\ 
~~ {x_{7,1}}+x_{6,2}+x_{8,1}+x_{7,2}+x_{9,1}+x_{8,2}+x_{10,1}+x_{9,2}+x_{11,1}+x_{10,2}+x_{12,1}+x_{11,2},\ 
~~ {x_{8,1}}+x_{7,2}+x_{9,1}+x_{8,2}+x_{10,1}+x_{9,2}+x_{11,1}+x_{10,2}+x_{12,1}+x_{11,2}+x_{13,1}+x_{12,2},\ 
~~ {x_{9,1}}+x_{8,2}+x_{10,1}+x_{9,2}+x_{11,1}+x_{10,2}+x_{12,1}+x_{11,2}+x_{13,1}+x_{12,2}+x_{14,1}+x_{13,2},\ 
~~ {x_{10,1}}+x_{9,2}+x_{11,1}+x_{10,2}+x_{12,1}+x_{11,2}+x_{13,1}+x_{12,2}+x_{14,1}+x_{13,2}+x_{15,1}+x_{14,2},\ 
~~ {x_{11,1}}+x_{10,2}+x_{12,1}+x_{11,2}+x_{13,1}+x_{12,2}+x_{14,1}+x_{13,2}+x_{15,1}+x_{14,2}+x_{16,1}+x_{15,2},\ 
~~ {x_{12,1}}+x_{11,2}+x_{13,1}+x_{12,2}+x_{14,1}+x_{13,2}+x_{15,1}+x_{14,2}+x_{16,1}+x_{15,2}+x_{17,1}+x_{16,2},\ 
~~ {x_{13,1}}+x_{12,2}+x_{14,1}+x_{13,2}+x_{15,1}+x_{14,2}+x_{16,1}+x_{15,2}+x_{17,1}+x_{16,2}+x_{18,1}+x_{17,2}\}.$
}
\end{small}
\end{example}
\medskip
\emph{Corollary 7.} If \mbox{$K-D+\lambda$} divides $K$ and $\lambda$ divides \mbox{$K-D$}, then the scalar linear code is given by $\mathfrak{C}=\{x_{i}+x_{i+\lambda}+x_{i+\lambda+(K-D)}+x_{i+2\lambda+(K-D)}+x_{i+2\lambda+2(K-D)}+x_{i+3\lambda+2(K-D)}+\dots+x_{i+(p-1)\lambda+(p-1)(K-D)}+x_{i+p\lambda+(p-1)(K-D)}|\ $\mbox{$i= \{1,2,\dots,K-D$\}}\} where $\frac{K}{K-D+\lambda}=p$ and $\frac{K-D}{\lambda}=m$.
\medskip
\medskip
\begin{example}
$K=24,\ U=1,\ D=20.$
\\
\\
$K=24,\ \Delta=19, \lambda=1$, capacity=$\frac{2}{5}$.\\
Let $y_{i}=x_{i,1}+x_{i-1,\ 2}$ for $i=1,2,\dots,24.$\\
\\
\\
\begin{small}
$y_{1}=x_{1,1}+x_{24,2}, ~~ ~y_{13}=x_{13,1}+x_{12,2}$,\\
$y_{2}=x_{2,1}+x_{1,2},  ~~~~y_{14}=x_{14,1}+x_{13,2}$,\\
$y_{3}=x_{3,1}+x_{2,2},  ~~~~y_{15}=x_{15,1}+x_{14,2}$,\\
$y_{4}=x_{4,1}+x_{3,2},  ~~~~y_{16}=x_{16,1}+x_{15,2}$,\\
$y_{5}=x_{5,1}+x_{4,2},  ~~~~y_{17}=x_{17,1}+x_{16,2}$,\\
$y_{6}=x_{6,1}+x_{5,2},  ~~~~y_{18}=x_{18,1}+x_{17,2}$,\\
$y_{7}=x_{7,1}+x_{6,2},  ~~~~y_{19}=x_{19,1}+x_{18,2}$,\\
$y_{8}=x_{8,1}+x_{7,2},  ~~~~y_{20}=x_{20,1}+x_{19,2}$,\\
$y_{9}=x_{9,1}+x_{8,2},  ~~~~y_{21}=x_{21,1}+x_{20,2}$,\\
$y_{10}=x_{10,1}+x_{9,2},  ~~y_{22}=x_{22,1}+x_{21,2}$,\\
$y_{11}=x_{11,1}+x_{10,2},  ~y_{23}=x_{23,1}+x_{22,2}$,\\
$y_{12}=x_{12,1}+x_{11,2},  ~y_{24}=x_{24,1}+x_{23,2}$.
\\
\\
The proposed code is $\mathfrak{C}=\{y_{1}+y_{2}+ y_{7}+y_{8}+y_{13}+y_{14}+y_{19}+y_{20},\ y_{2}+y_{3}+ y_{8}+y_{9}+y_{14}+y_{15}+y_{20}+y_{21},\ y_{3}+y_{4}+ y_{9}+y_{10}+y_{15}+y_{16}+y_{21}+y_{22},\ y_{4}+y_{5}+ y_{10}+y_{11}+y_{16}+y_{17}+y_{22}+y_{23},\ y_{5}+y_{6}+ y_{11}+y_{12}+y_{17}+y_{18}+y_{23}+y_{24}\}$.
\\
\\
$ \mathfrak{C^{(2)}}=\{{x_{1,1}}+x_{24,2}+x_{2,1}+x_{1,2}+x_{7,1}+x_{6,2}+x_{8,1}+x_{7,2}+x_{13,1}+x_{12,2}+x_{14,1}+x_{13,2}+x_{19,1}+x_{18,2}+x_{20,1}+x_{19,2},\ 
~~ {x_{2,1}}+x_{1,2}+x_{3,1}+x_{2,2}+x_{8,1}+x_{7,2}+x_{9,1}+x_{8,2}+x_{14,1}+x_{13,2}+x_{15,1}+x_{14,2}+x_{20,1}+x_{19,2}+x_{21,1}+x_{20,2},\ 
~~ {x_{3,1}}+x_{2,2}+x_{4,1}+x_{3,2}+x_{9,1}+x_{8,2}+x_{10,1}+x_{9,2}+x_{15,1}+x_{14,2}+x_{16,1}+x_{15,2}+x_{21,1}+x_{20,2}+x_{22,1}+x_{21,2},\ 
~~ {x_{4,1}}+x_{3,2}+x_{5,1}+x_{4,2}+x_{10,1}+x_{9,2}+x_{11,1}+x_{10,2}+x_{16,1}+x_{15,2}+x_{17,1}+x_{16,2}+x_{22,1}+x_{21,2}+x_{23,1}+x_{22,2},\ 
~~ {x_{5,1}}+x_{4,2}+x_{6,1}+x_{5,2}+x_{11,1}+x_{10,2}+x_{12,1}+x_{11,2}+x_{17,1}+x_{16,2}+x_{18,1}+x_{17,2}+x_{23,1}+x_{22,2}+x_{24,1}+x_{23,2}\}.$\\
\end{small}
\end{example}
\medskip
\emph{Corollary 8.} If  $D$ divides $K+\lambda$ and $\lambda$ divides $D$, then the scalar linear code is given by $\mathfrak{C}=\{{x_{i+(j-1)D}+x_{i+jD}}|\ i = \{1,2,\dots,D\},\ j = \{1,2,\dots,n-2\}\}\\ \cup \{x_{K-2D+1+\lambda+i'}+x_{K-D+1+i'}+x_{K-\lambda+1+i' mod \lambda}|\ i' = \{0,1,2,\dots,p-1\}\}$ where $\frac{K+\lambda}{D}=n(>2)$, $p=K$ mod $D$ = $D-\lambda$.
\medskip
\medskip
\begin{example}
$K=19, U=2,\ D=7.$
\\
\\
$K=19,\ \Delta=5,\ \lambda=1$, capacity=$\frac{3}{14}$.\\
Let $y_{i}=x_{i,1}+x_{i-1,\ 2}+x_{i-2,\ 3}$ for $i=1,2,\dots,19.$
\\
\\
\begin{small}
$y_{1}=x_{1,1}+x_{19,2}+x_{18,3}, ~~ y_{11}=x_{11,1}+x_{10,2}+x_{9,3}$,\\
$y_{2}=x_{2,1}+x_{1,2}+x_{20,3},  ~~~y_{12}=x_{12,1}+x_{11,2}+x_{10,3}$,\\
$y_{3}=x_{3,1}+x_{2,2}+x_{1,3},  ~~~~y_{13}=x_{13,1}+x_{12,2}+x_{11,3}$,\\
$y_{4}=x_{4,1}+x_{3,2}+x_{2,3},  ~~~~y_{14}=x_{14,1}+x_{13,2}+x_{12,3}$,\\
$y_{5}=x_{5,1}+x_{4,2}+x_{3,3},  ~~~~y_{15}=x_{15,1}+x_{14,2}+x_{13,3}$,\\
$y_{6}=x_{6,1}+x_{5,2}+x_{4,3},  ~~~~y_{16}=x_{16,1}+x_{15,2}+x_{14,3}$,\\
$y_{7}=x_{7,1}+x_{6,2}+x_{5,3},  ~~~~y_{17}=x_{17,1}+x_{16,2}+x_{15,3}$,\\
$y_{8}=x_{8,1}+x_{7,2}+x_{6,3},  ~~~~y_{18}=x_{18,1}+x_{17,2}+x_{16,3}$,\\
$y_{9}=x_{9,1}+x_{8,2}+x_{7,3},  ~~~~y_{19}=x_{19,1}+x_{18,2}+x_{17,3}$,\\
$y_{10}=x_{10,1}+x_{9,2}+x_{8,3}.$\\ \\
The proposed code is $\mathfrak{C}=\{y_{1}+y_{6},\ y_{6}+y_{11},\ y_{11}+y_{15}+y_{19},\ y_{2}+y_{7},\ y_{7}+y_{12},\ y_{12}+y_{16}+y_{19},\ y_{3}+y_{8},\ y_{8}+y_{13},\ y_{13}+y_{17}+y_{19},\ y_{4}+y_{9},\ y_{9}+y_{14},\ y_{14}+y_{18}+y_{19}, \ y_{5}+y_{10},\ y_{10}+y_{15}\}.$
{
$ ~ \mathfrak{C^{(3)}}=\{{x_{1,1}}+x_{19,2}+x_{18,3}+x_{6,1}+x_{5,2}+x_{4,3},\
~~ {x_{2,1}}+x_{1,2}+x_{20,3}+x_{7,1}+x_{6,2}+x_{5,3},\
~~ {x_{3,1}}+x_{2,2}+x_{1,3}+x_{8,1}+x_{7,2}+x_{6,3},\
~~ {x_{4,1}}+x_{3,2}+x_{2,3}+x_{9,1}+x_{8,2}+x_{7,3},\
~~ {x_{5,1}}+x_{4,2}+x_{3,3}+x_{10,1}+x_{9,2}+x_{8,3},\
~~ {x_{6,1}}+x_{5,2}+x_{4,3}+x_{11,1}+x_{10,2}+x_{9,3},\
~~ {x_{7,1}}+x_{6,2}+x_{5,3}+x_{12,1}+x_{11,2}+x_{10,3},\
~~ {x_{8,1}}+x_{7,2}+x_{6,3}+x_{13,1}+x_{12,2}+x_{11,3},\
~~ {x_{9,1}}+x_{8,2}+x_{7,3}+x_{14,1}+x_{13,2}+x_{12,3},\
~~ {x_{10,1}}+x_{9,2}+x_{8,3}+x_{15,1}+x_{14,2}+x_{13,3},\
~~ {x_{11,1}}+x_{10,2}+x_{9,3}+x_{15,1}+x_{14,2}+x_{13,3}+x_{19,1}+x_{18,2}+x_{17,3},\
~~ {x_{12,1}}+x_{11,2}+x_{10,3}+x_{16,1}+x_{15,2}+x_{14,3}+x_{19,1}+x_{18,2}+x_{17,3},\
~~ {x_{13,1}}+x_{12,2}+x_{11,3}+x_{17,1}+x_{16,2}+x_{15,3}+x_{19,1}+x_{18,2}+x_{17,3},\
~~ {x_{14,1}}+x_{13,2}+x_{12,3}+x_{18,1}+x_{17,2}+x_{16,3}+x_{19,1}+x_{18,2}+x_{17,3}\}.$
}
\end{small}
\end{example}
\medskip
\emph{Corollary 9.} If $K-D$ divides $K+\lambda$ and $\lambda$ divides $K-D$, then the scalar linear code\\ $\mathfrak{C}$=$\{x_{k}+x_{k+m}+x_{k+2m}+ \dots+x_{k+(q-1)m}+x_{k+(q-1)m+\lambda}+x_{k+(q-1)m+2\lambda}+ \dots + x_{k+(q-1)m+(s-2)\lambda}| k =1,2,\dots,\lambda\}\cup \{x_{k}+x_{k+m}+x_{k+2m}\dots+x_{k+(q-2)m}+x_{k+(q-1)m-\lambda}|\ k = \lambda+1,\lambda+2,\dots,p\}\cup \{x_{k}+x_{k+m}+x_{k+2m}+ \dots+x_{k+(q-2)m}+x_{k+(q-2)m+\lambda}+x_{k+(q-2)m+2\lambda}+\dots + x_{k+(q-2)m+(s-1)\lambda}|\ k =p+1,p+2,\dots,m\}$\\ where $K-D=m$, $K-D-\lambda=p$, $\frac{K+\lambda}{K-D}=q$  and \mbox{$\frac{K-D}{\lambda}=s$}.
\medskip
\begin{example}
$K=28,\ U=1,\ D=19.$
\\
\\
$K=28,\ \Delta=18,\lambda=2$, capacity=$\frac{2}{10}$.\\
Let $y_{i}=x_{i,1}+x_{i-1,\ 2}$ for $i=1,2,\dots,28.$
\\
\\
$y_{1}=x_{1,1}+x_{28,2}, ~~ ~y_{15}=x_{15,1}+x_{14,2}$,\\
$y_{2}=x_{2,1}+x_{1,2},  ~~~~y_{16}=x_{16,1}+x_{15,2}$,\\
$y_{3}=x_{3,1}+x_{2,2},  ~~~~y_{17}=x_{17,1}+x_{16,2}$,\\
$y_{4}=x_{4,1}+x_{3,2},  ~~~~y_{18}=x_{18,1}+x_{17,2}$,\\
$y_{5}=x_{5,1}+x_{4,2},  ~~~~y_{19}=x_{19,1}+x_{18,2}$,\\
$y_{6}=x_{6,1}+x_{5,2},  ~~~~y_{20}=x_{20,1}+x_{19,2}$,\\
$y_{7}=x_{7,1}+x_{6,2},  ~~~~y_{21}=x_{21,1}+x_{20,2}$,\\
$y_{8}=x_{8,1}+x_{7,2},  ~~~~y_{22}=x_{22,1}+x_{21,2}$,\\
$y_{9}=x_{9,1}+x_{8,2},  ~~~~y_{23}=x_{23,1}+x_{22,2}$,\\
$y_{10}=x_{10,1}+x_{9,2},  ~~y_{24}=x_{24,1}+x_{23,2}$,\\
$y_{11}=x_{11,1}+x_{10,2},  ~y_{25}=x_{25,1}+x_{24,2}$,\\
$y_{12}=x_{12,1}+x_{11,2},  ~y_{26}=x_{26,1}+x_{25,2}$,\\
$y_{13}=x_{13,1}+x_{12,2},  ~y_{27}=x_{27,1}+x_{26,2}$,\\
$y_{14}=x_{14,1}+x_{13,2},  ~y_{28}=x_{28,1}+x_{27,2}$.
\\
\\
The proposed code is $\mathfrak{C}=\{y_{1}+y_{11}+y_{21}+y_{23}+y_{25}+y_{27},\ y_{2}+y_{12}+y_{22}+y_{24}+y_{26}+y_{28},\ y_{3}+y_{13}+y_{21},\ y_{4}+y_{14}+y_{22},\ y_{5}+y_{15}+y_{23},\ y_{6}+y_{16}+y_{24},\ y_{7}+y_{17}+y_{25},\ y_{8}+y_{18}+y_{26},\ y_{9}+y_{19}+y_{21}+y_{23}+y_{25}+y_{27},\ y_{10}+y_{20}+y_{22}+y_{24}+y_{26}+y_{28}\}.$
\\
\\
$ \mathfrak{C^{(2)}}=\{{x_{1,1}}+x_{28,2}+x_{11,1}+x_{10,2}+x_{21,1}+x_{20,2}+x_{23,1}+x_{22,2}+x_{25,1}+x_{24,2}+x_{27,1}+x_{26,2},\
~~ {x_{2,1}}+x_{1,2}+x_{12,1}+x_{11,2}+x_{22,1}+x_{21,2}+x_{24,1}+x_{23,2}+x_{26,1}+x_{25,2}+x_{28,1}+x_{27,2},\
~~ {x_{3,1}}+x_{2,2}+x_{13,1}+x_{12,2}+x_{21,1}+x_{20,2},\
~~ {x_{4,1}}+x_{3,2}+x_{14,1}+x_{13,2}+x_{22,1}+x_{21,2},\
~~ {x_{5,1}}+x_{4,2}+x_{15,1}+x_{14,2}+x_{23,1}+x_{22,2},\
~~ {x_{6,1}}+x_{5,2}+x_{16,1}+x_{15,2}+x_{24,1}+x_{23,2},\
~~ {x_{7,1}}+x_{6,2}+x_{17,1}+x_{16,2}+x_{25,1}+x_{24,2},\
~~ {x_{8,1}}+x_{7,2}+x_{18,1}+x_{17,2}+x_{26,1}+x_{25,2},\
~~ {x_{9,1}}+x_{8,2}+x_{19,1}+x_{18,2}+x_{21,1}+x_{20,2}+x_{23,1}+x_{22,2}+x_{25,1}+x_{24,2}+x_{27,1}+x_{26,2},\
~~ {x_{10,1}}+x_{9,2}+x_{20,1}+x_{19,2}+x_{22,1}+x_{21,2}+x_{24,1}+x_{23,2}+x_{26,1}+x_{25,2}+x_{28,1}+x_{27,2}\}.$\\
\end{example}
\section{Discussion}
\label{sec6}
In this paper a construction is given for vector linear index codes of multiple unicast index problems from scalar linear codes which results in a sequence of index coding problems with same number of messages and receivers and two sided antidote patterns. Moreover, it is shown that if the problem with which the construction begins  has an optimal linear index code then it induces an optimal linear index code for the vector index coding problem. This  construction has been used on few classes of index coding problems given in \cite{MRRarXiv} for which optimal linear index codes are known and new classes codes have been obtained starting from these classes of codes. \\

Another interesting direction of further research is to study the suitability of the new classes of codes presented in this paper for application to noise broadcasting problem. Recently, it has been observed that in a noisy index coding problem it is desirable for the purpose of reducing the probability of error that  the receivers use as small a number of transmissions from the source as possible and linear index codes with this property have been reported in \cite{TRCR}, \cite{KaR}. While the report \cite{TRCR} considers fading broadcast channels, in \cite{AnR} AWGN channels are considered and it is reported that linear index codes with minimum length (capacity achieving codes or optimal length codes) help to facilitate to achieve more reduction in probability of error compared to non-minimum length codes for receivers with large amount of side-information. These aspects remain to be investigated for the new classes of sequences of vector codes presented in this paper.

\end{document}